\def\E{\mathcal{E}}
\def\V{\mathcal{V}}
\def\W{\mathcal{W}}
\def\L{\mathcal{L}}
\def\C{\mathcal{C}}
\newtheorem{theorem}{Theorem}[section]
\newtheorem{proposition}[theorem]{Proposition}
\newtheorem{property}{Property}
\title{A graph complexity measure based on the spectral analysis of the Laplace operator}
\author[1,2,3*]{Diego M. Mateos }
\author[3] {Federico Morana}
\author[1,3]{Hugo Aimar}
\affil[1]{\normalsize Consejo Nacional de Investigaciones Cient\'ificas y T\'ecnicas (CONICET), Argentina.}
\affil[2]{\normalsize Facultad de Ciencia y Tecnolog\'{\i}a. Universidad Aut\'{o}noma de Entre R\'{\i}os (UADER). Oro Verde, Entre R\'{\i}os, Argentina.}
\affil[3]{\normalsize Instituto de Matem\'{a}tica Aplicada del Litoral (IMAL), UNL, CONICET, CCT CONICET, Santa F\'e, Argentina.}
\affil[*]{Corresponding author: Diego M. Mateos, mateosdiego@gmail.com.}
\date{}
\begin{document}

\maketitle

\begin{abstract}
In this work we introduce a concept of complexity for undirected graphs in terms of the spectral analysis of the Laplacian operator defined by the incidence matrix of the graph. Precisely, we compute the norm of the vector of eigenvalues of both the graph and its complement and take their product. Doing so, we obtain a quantity that satisfies two basic properties that are the expected for a measure of complexity. First, complexity of fully connected and fully disconnected graphs vanish. Second, complexity of complementary graphs coincide. This notion of complexity allows us to distinguish different kinds of graphs by placing them in a ``croissant-shaped'' region of the plane link density - complexity, highlighting some features like connectivity, concentration, uniformity or regularity and existence of clique-like clusters. Indeed, considering graphs with a fixed number of nodes, by plotting the link density versus the complexity we find that graphs generated by different methods take place at different regions of the plane. We consider some of the paradigmatic randomly generated graphs, in particular the Erd\"os-R\'enyi, the Watts-Strogatz and the Barab\'asi-Albert models. Also, we place some particular, let us say deterministic, well known hand-crafted graphs, to wit, lattices, stars, hyper-concentrated and cliques-containing graphs. It is worthy noticing that these deterministic classical models of graphs depict the boundary of the croissant-shaped region. Finally, as an application to graphs generated by real measurements, we consider the brain connectivity graphs from two epileptic patients obtained from magnetoencephalography (MEG) recording, both in a baseline period and in ictal periods (epileptic seizures). In this case, our definition of complexity could be used as a tool for discerning between states, by the analysis of differences at distinct frequencies of the MEG recording.
\end{abstract}

\section{Introduction}
\label{sec:Introduction}
The Laplace operator in the Euclidean setting provides the Hamiltonian of the free particle in quantum mechanics. On the other hand, the classical Laplace operator is the Euler-Lagrange equation provided, in a domain $\Omega$ of the space, by $\L(\varphi)=\int |\nabla \varphi|^2 $. For a finite set $\{ 1,2,...,n \}=\mathcal{V}$ and functions $\varphi$ taking real or complex values defined in $\V$ a discrete analogous of $\L$ is $\mathcal{L}(\varphi)=\sum_{i=1}^n \sum_{j=1}^n |\varphi_i - \varphi_j|^2$. Now, the corresponding Euler-Lagrange equation is $\sum_{i=1}^n (f_i - f_j)=0$ for every $i \in \mathcal{V}$. When instead of the whole set $ \mathcal{V} \times \mathcal{V}$ we have an undirected graph, with vertices in $\mathcal{V}$ and adjacency matrix $\mathcal{W}=(w_{ij})$ taking the value 1 when there is an edge joining vertices $i$ and $j$ and zero otherwise, the natural Lagrange functional is $\L_\W (\varphi)=\sum_{i=1}^n \sum_{j=1}^n w_{ij}|\varphi_i - \varphi_j|^2$. The corresponding Euler - Lagrange equation is provided by the graph Laplace equation $\Delta \varphi=0$, with $\Delta \varphi(i)=\sum_{j=1}^n w_{ij}(\varphi_j - \varphi_i)$, $w_{ii}=0$ and $w_{ij}=w_{ji}$. 
The above considerations suggest that the spectral theory of this discrete Laplace operator will provide the fundamental states and the corresponding energy levels for a ``free particle'' in the graph defined by $\W$ and $\V$. The energy levels are the eigenvalues of $\Delta$. In the classical setting a well known  measure of information is given by the Von Neumann entropy that takes into account the eigenvalues of the Laplacian. Here, we aim to use the spectral analysis of the discrete Laplacian  in order to introduce a notion of complexity for graphs. We shall precisely define it in the next section. Actually we shall determine a region in the plane of the variables \textit{link density-complexity} which contains the most well known random graphs such as the Erd\"os-R\'enyi, the Watts-Strogatz and the Barab\'asi-Albert, but also the deterministic (non-random) classical graphs. As an application to real data we consider the brain connectivity graphs for different states of epileptics patients through magnetoencephalogaphy (MEG). Section \ref{sec:Mathmodel} describe briefly the mathematical model, some proofs are contained in Appendix \ref{sec:Appendix}. Section \ref{sec:BuildPlane} is devoted to empirically describe the croissant-shape of the region in the plane of the variables \textit{link density-complexity}, which we conjecture  contains all the graphs, random or not. Section \ref{sec:RandomModels} contains the analysis and the result for the three most paradigmatic random graphs: Erd\"os-R\'enyi, Watts-Strogatz and Barab\'asi-Albert. In Section \ref{sec:ApplicationNeuro} we provide application to neurophysiological data. Section \ref{sec:Discussion} contains comments and conclusions. As we said before Appendix \ref{sec:Appendix} describes  some of the mathematical proofs.

\section{The mathematical model}
\label{sec:Mathmodel}
Let $G=\left(\mathcal{V},\mathcal{E},\W\right)$ be a simple undirected graph, where $\mathcal{V}=\{1,\dots,n\}$ is the set of vertices or nodes, $\mathcal{E}=\{e_1,\dots,e_m\}\subset \{ \{i,j\}\colon i,j\in \mathcal{V}\}$ is the set of edges and $\W\colon \mathcal{V}\times\mathcal{V}\to\{0,1\}$ is the adjacency matrix of $G$ with $w_{ij}=1$ whenever $\{i,j\}\in \mathcal{E}$ and zero otherwise. Since the graph is undirected and simple the matrix $\W$ is symmetric with null diagonal. We will denote $i\sim j$ when $\{i,j\}\in \mathcal{E}$.

The degree of a vertex $j$ is defined by $\delta(j)=\sum_{i\in\mathcal{V}} w_{ij}$. The degree matrix is defined as the diagonal $n\times n$ matrix containing the degrees of the nodes and denoted by $D=diag(\delta(1),\dots,\delta(n))$. The Laplacian of the graph is the lineal operator acting on real or complex functions defined on the nodes, with matrix given by 
\begin{equation}\label{eq:laplacian}
    \Delta=\W-D.
\end{equation}
This operator is symmetric and negative semi-definite. Therefore we can apply the spectral theorem to obtain an orthonormal basis of $\ell^2(\mathcal{V})\sim\mathbb{R}^n$ of eigenvectors $\{\psi_1,\dots,\psi_n\}$ of $\Delta$. It is usually called the Fourier basis of $G$. The associated eigenvalues $\{\lambda_1,\dots,\lambda_n\}$ satisfies $0=\lambda_1\geq\lambda_2\geq\dots\geq\lambda_n$. 
In the following we will refer to the vector $\overline{\lambda} = (\lambda_1,\dots,\lambda_n)$ as the spectrum of the graph $G$. The trace of the Laplacian is a feature of interest for our further analysis and is given by $\sum_{i=1}^n \lambda_i=-2m$, where $m$ is the number of edges of $G$. For a general reference regarding the spectral theory of the Laplacian on graphs see \cite{BroBruLeCSzlVan17} and references therein.

With the energy point of view described in Section \ref{sec:Introduction}, we may consider equivalent two graphs $G$ and $G'$ that share the spectrum $\bar{\lambda}$. Hence, for $G$ and $H$ two graphs with the same number $n$ of vertices, the function $d_s(G,H)=\left \| \bar{\lambda}_G- \bar{\lambda}_H \right \|$, with 
$\bar{\lambda}_G$ and  $\bar{\lambda}_H$ the spectral vectors of $G$ and $H$ respectively and $\left \| \cdot \right \|$ any norm in $\mathbb{R}^n$, is a distance (metric) between the classes of co-spectrality of $G$ and $H$. We shall take $\left \| \bar{\lambda } \right \|=\left ( \sum_{i=1}^{n} |\lambda_{i}|^2 \right )^{1/2}$ the usual norm. We shall refer to $d_s$ as the spectral distance. Notice that since the first eigenvalue $\lambda_1$ of each graph vanishes, we actually have that $d_s (G,H)=| \bar{\Lambda}_G - \bar{\Lambda}_H  |$, where $ \bar{\Lambda }=\left ( \lambda_2,...,\lambda_n \right )$ and $ | \cdot | $ is the euclidean norm in $\mathbb{R}^{n-1}$. 
The spectral distance on graphs was considered before in \cite{gu2015spectral}, see also \cite{deza2016encyclopedia}. 

In order to introduce our definition of spectral complexity of a graph, let us set $Z$ to denote the null graph, i.e $w_{ij}=0$ for every $i,j=1,...,n$ and $F$ the complete graph i.e $w_{ij}=1$ for every $i \neq j$. Now we can define the \textbf{spectral complexity} of a graph $G$ with $n$-vertices 

\begin{equation}
    \begin{split}
        \C_s(G) & = d_s(G,Z) \cdot d_s(G,F) \\
        & = \left \| \bar{\lambda}_G - \bar{\lambda}_Z \right \|~\left \| \bar{\lambda}_G - \bar{\lambda}_F \right \| \\
        &=| \bar{\Lambda}_G - \bar{\Lambda}_Z |~| \bar{\Lambda}_G - \bar{\Lambda}_F |.
\end{split}
\end{equation}

Two basic premises are behind this definition. The first one is that both, the null graph and the full graph are the less complex graphs that can be defined on the vertices set $\V=\{ 1,...,n\}$. The second is that complementary graphs should have the same complexity. The following properties that we prove in Appendix \ref{sec:Appendix} show that our definition of spectral complexity $\C_s$ satisfies those two requirements. 

\begin{enumerate}[a)]
\item If $F$ is the complete graph, then $\bar{\Lambda}_F=\bar{n}=\left ( n, n, ...,n \right ) \in \mathbb{R}^{n-1}$.
\item If $Z$ is the null graph, then $\bar{\Lambda}_Z=\bar{0}=\left ( 0, 0, ...,0 \right ) \in \mathbb{R}^{n-1}$.
\end{enumerate}

If $w_{ij}$ is the adjacency matrix of $G$, the complement $G^c$, of $G$ is the graph defined by $w_{ij}^c=1$ if $w_{ij}=0$, $i\neq j$, and $w_{ij}^c=0$ if $w_{ij}=1$. Let $\bar{\lambda}=\left ( \lambda _1,...,\lambda_n \right )$ denote the spectral vector of $G$ and let $\bar{\lambda}^c$ denote the spectral vector of $G^c$. Then, 

\begin{enumerate}[c)]
\item $\bar{\Lambda}^c=-\bar{n}-\bar{\Lambda}$.
\end{enumerate}

These basic facts provide a directly computable formula for the above defined spectral complexity,

\begin{equation}
    \C_s(G)=| \bar{\Lambda}  | \cdot | \bar{\Lambda}^c  |=| \bar{\Lambda}  | \cdot | \bar{n} + \bar{\Lambda}  |.
\end{equation}

A second quantity associated to a graph that we shall take into account in our analysis is its link density. The link density $\rho$ of a simple unidirected graph is the number of actual edges divided by the number of all possible edges. With our notation 

\begin{equation}
    \rho(G)=\frac{2m}{n(n-1)}.
\end{equation}

Given a positive integer $n$ we shall display all the possible graphs $G$ built on $\V=\{ 1,2,...,n \}$ in the plane of the variables $\rho(G)$ and $\C_s(G)$. Since the density of a graph $G$ and the density of its complement can be quite different, actually $\rho(G)+\rho(G^c)=1$, it is clear that  the link density is not a function of the spectral complexity. It is also simple to show that graphs with the same density may have different spectral complexity. So neither $\rho$ is a function of $\C_s$ nor $\C_s$ is a function of $\rho$. As could be expected. Nevertheless $\rho$ and $\C_s$ are not completely independent. In fact we empirically determine the region in the region in the plane ($\rho$, $\C_s$) spanned by all possible graphs. It is clear from its very definition that no matter how large is $n$, the link density is normalized $0 \leq \rho \leq 1$. This is not the case for $\C_s$ as defined above. Hence in order to be able to compare the values of $\C_s$ for graph with different numbers of vertices, we shall also empirically normalize the spectral complexity. So that the region that we are looking for will be a subset of the unit square $\left [ 0,1 \right ] \times  \left [ 0,1 \right ]$.

\bigskip
\section{The ``croissant-shaped'' domain for the pairs ($\rho$, $\C_s$) for every $n$ }
\label{sec:BuildPlane}

The delimitation of the region in the representation plane \emph{link density-complexity} where all the variety of graphs take place is not a trivial task to perform theoretically. Here we obtain an empirical approximation of the upper and lower boundaries, derived by placing a wide variety of graphs generated by random and deterministic methods.

Having fixed the number of nodes, we observe that the most complex graphs of a given number of edges are those where the connections are concentrated in few nodes (usually called `hubs') and, conversely, the graphs with node degrees uniformly distributed are the less complex ones. The former are named `multi-stars' or `multi-hubs' type graphs and the latter are the so called `lattices'. In between fall many of the paradigmatic well-known graphs, each found in some particular point regarding its nature. As well, taking into account the second principle of our definition of complexity, this behaviour gets reflected in the complementary graphs. In that regard, we shall notice that the complement of a multi-stars graph is a graph with one clique and a remaining number of isolated nodes, in correspondence with the number of hubs of the original graph. On the other hand, the complement of a lattice type graph is another lattice. As the extreme complexity values are obtained for the most pure form of the described type of graphs, those are the ones that give us the boundaries.

Let us first consider graphs of a fixed number of nodes $n$. Here the upper limit is constructed simply as the polygonal joining the points corresponding to the multi-stars graphs of link density less or equal than $0.5$, and the points corresponding to their complements to complete the side of the graphs with link density greater than $0.5$. In particular, the multi-stars graph with lower link-density  is the single-star graph, which is the graph with one node connected to every other node and no extra links. The following multi-star graph shall be the one with two fully connected nodes and no extra attachments. And so on while the link density remains lower than a half. In the same manner the lower limit consists on the polygonal joining the points corresponding to the regular lattices. For instance, we shall find there the placement of the cycle graph. These upper and lower boundaries form a rough croissant-shape contained in the unit square $\left [ 0,1 \right ] \times  \left [ 0,1 \right ]$. This region is similar for every $n$, although slightly increasing with $n$, it seems to stabilize asymptotically when $n$ tends to infinity.
Then we can go further ahead and consider the limit shape.
That will be the domain for the family of all the graphs of any finite number of nodes.



The boundary graphs can be generated with a recursive algorithm that at each step adds some particular edges to the previous graph, starting with the null graph and ending with the complete graph. Essentially we are generating a sequence of adjacency matrices by replacing zeroes by ones at the positions that correspond to the edges being added. For example, if we add an edge between the vertices $i$ and $j$ then the new adjacency matrix will have a one at the positions $(i,j)$ and $(j,i)$. In account of the symmetry, we will show how the algorithms acts on the upper triangular part of the adjacency matrices of the sequences of graphs that are being generated. 


In order to generate the upper boundary we start with the triangular array of zeros and we fill it with rows of ones at steps, from up to down, until completion (Figure~\ref{fig:matrices}.A).
%
%
In order to obtain the lower bound we do as before but now moving diagonally, in such a way that a ring is added to the graph at each step. 
Precisely, we fill the following sequence of pairs of diagonals in order $(1, n-1)\to (2, n-2)\to (3, n-3)\to \dots$ 
until the all-ones array is reached, as shown in Figure~\ref{fig:matrices}.B.

\begin{figure}[h]
\centerline{\includegraphics[width=\textwidth]{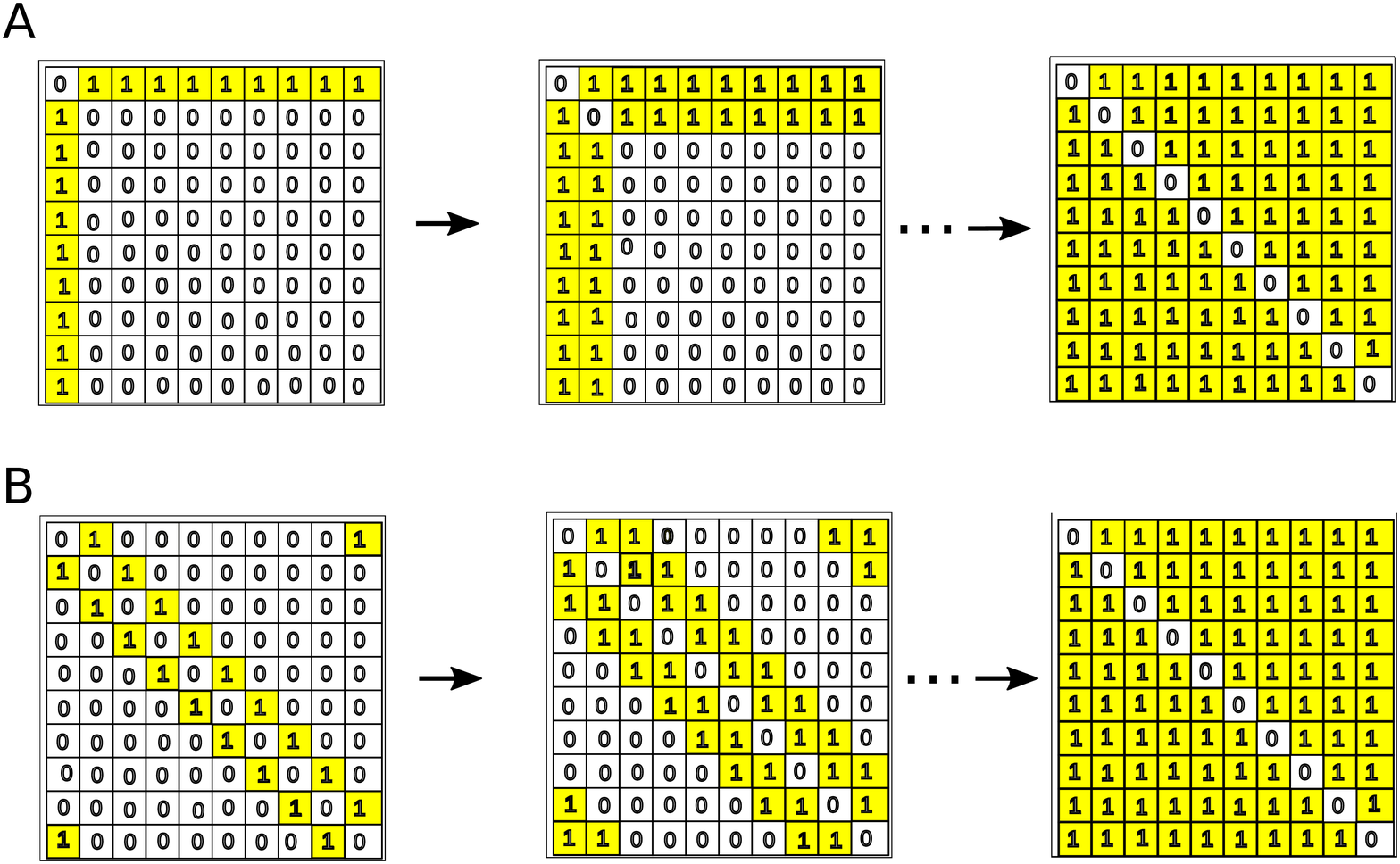}}
\caption{ Generation of the croissant-shaped boundary. A) The sequence of adjacency matrices of the graphs that 
configure the upper limit is generated recursively, 
starting with a matrix of zeros and filling its upper triangular part with a row of ones at each step, 
from top to bottom until completion (the lower triangular part of each matrix is completed by symmetry). B) For the lower boundary we do as before but the ones are placed in such a way that a ring is added to the graph at each step.}
\label{fig:matrices}
\end{figure}

Figure \ref{fig:mapa_completo} depicts the croissant-shaped region for $N=100$ and the placement of some paradigmatic graphs of 15 vertices.

\begin{figure}[h]
\centerline{\includegraphics[width=\textwidth]{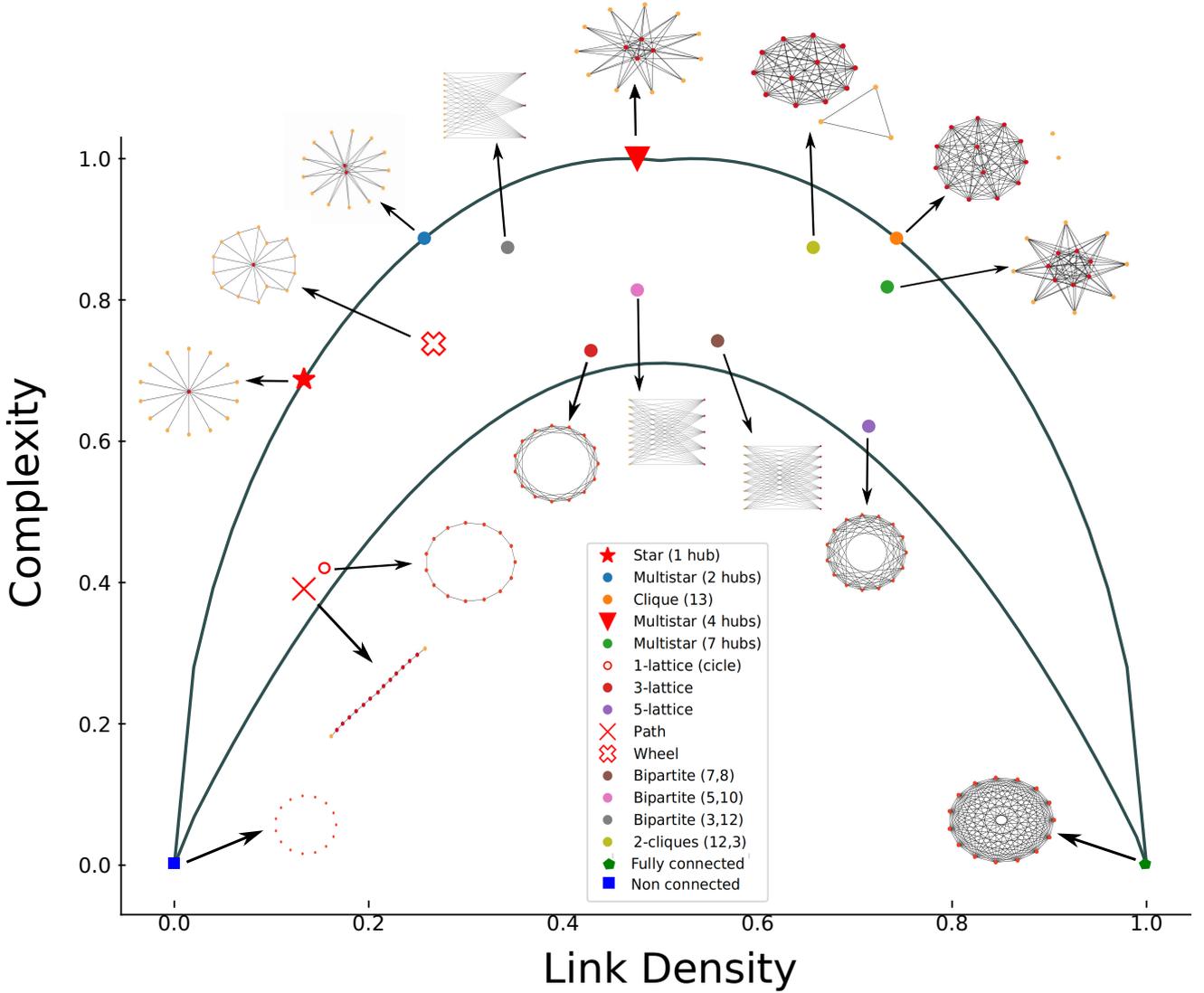}}
\caption{  Schematic distribution of the different types of networks for $n=15$ on the ``croissant-shaped'' region.}
\label{fig:mapa_completo}
\end{figure}

\section{Random graphs}
\label{sec:RandomModels}

In this section we consider and plot for several values of $n$, three well known stochastic models, the Erd\"os-R\'enyi model, the Watts-Strogatz model, and the Barab\'asi-Albert model. As we shall see, each 
of them draws some characteristic pattern contained in 
the basic croissant shape.
In particular, the Erd\"os-R\'enyi model closely depicts the lower bound curve of the region.

\subsection{Erd\"os-R\'enyi model}
The Erd\"os-R\'enyi (ER) is one of the simplest models for generating random graphs (\cite{gilbert1959random}, \cite{erdHos1960evolution}).
We shall use the approach introduced by Edgar Gilbert in 1959, the so called $G(n,p)$ model. In this model a graph with $n$ vertices is constructed by connecting nodes randomly, including each edge in the graph with probability $p$ independently from every other edge. 
The parameter ``\textit{linking probability}'' $p$ is the expected value of the link density of a $G(n,p)$ generated graph.

For the analysis of the model we consider graphs with number of nodes $n=100, 200, 300, 400$ and linking probability $p$ for values equispaced in $[0,1]$ with $\Delta p=0.01$. For each fixed pair of parameters $(n,p)$ we generate $100$ graphs and compute their complexity and link density. On these sets of values we calculate the mean values and standard deviations, and plot the resulting point and deviation in the link density vs complexity plane. The result is shown in Figure~\ref{fig:ER_model}.

\begin{figure}[h]
\centerline{\includegraphics[width=\textwidth]{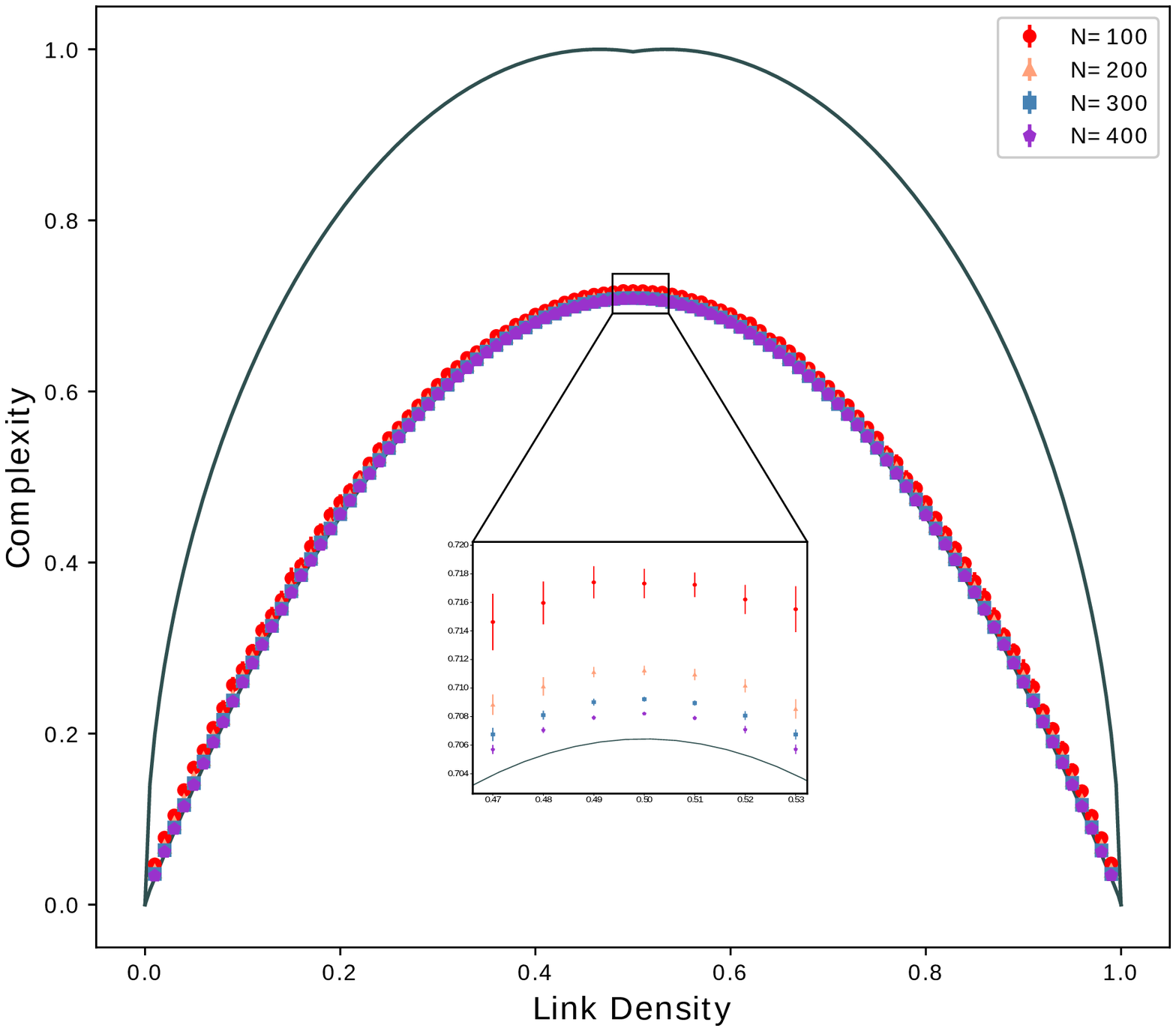}}
\caption{  Analysis of the Erd\"os--R\'enyi model for different linkage probability ($p  \in [0,1] $) and number of nodes  $n=100,200,300,400$. Each point/bar represents the mean/standard deviation performed over 100 iterations.}
\label{fig:ER_model}
\end{figure}

As we have said before this model closely follows the lower bound curve of the croissant shaped region. Thereby, as could be expected, the maximum complexity occurs at $p=0.5$, for every $n$.

\subsection{Watts-Strogatz model }
The Watts-Strogatz model (WS) is a graph generation method based on rewiring edges randomly. It starts with a $k$-ring (or lattice) graph of $n$ nodes, with $1\leq k\leq\frac{n-1}{2}$, and a rewiring probability $\beta$.
In a $k$-ring, each node is connected with $k$ neighbours at each side (left and right). 
The algorithm consist in going throughout the nodes, one by one, and at each node to put in consideration the rewiring of the edges connected to a right neighbour. Each of these edges will be rewired with a probability $\beta$ to form a new connection of the present node to another randomly selected between the nodes that are not currently connected to it.  This model can be though of as an interpolation between a lattice and something close to an ER graph. 
When $\beta=0$ there is not rewiring and the initial ring lattice is preserved. On the other hand, for $\beta=1$ the whole structure is reconfigured randomly. To some extent this extreme case resembles the ER model, but not quite, since every node will surely remain connected to at least $k$ other nodes. For intermediate values of $\beta$, the model generates networks with the so called \textit{small-world} property, including short average path lengths and high clustering (\cite{watts1998collective}, see also \cite{travers1977experimental} and \cite{boccaletti2006complex}).

Again, with this method we generate graphs with number of nodes $n=100, 200, 300, 400$, number of rings $k$ ranging from $1$ to $\frac{n-1}{2}-\frac{1}{2}$ (since we take $n$ even) and the parameter $\beta$ varying from $0$ to $1$ with $\Delta \beta=0.1$. For each triplet $(n,k,\beta)$ a hundred graphs were generated calculating their complexities and the mean value and standard deviation of that set of values. Notice that the link density of a $(n,k,\beta)$ WS--graph is given by $\frac{2k}{n-1}$. Then those points were plotted in the link density vs complexity plane, as shown in Figure~\ref{fig:WS_model}.

The experiments show that the WS--graphs with $n$ and $k$ fixed (having common link density) have bigger complexities as $\beta$ increases, with a minimum when $\beta=0$ (coincident to the corresponding lattice complexity) and where an upper bound is given by the expected value of the complexities of the ER--graphs with the same link density.
This is consistent with the previous observation that this model is a sort of 
interpolation between a lattice and an ER graph.

\begin{figure}[h]
\centerline{\includegraphics[width=\textwidth]{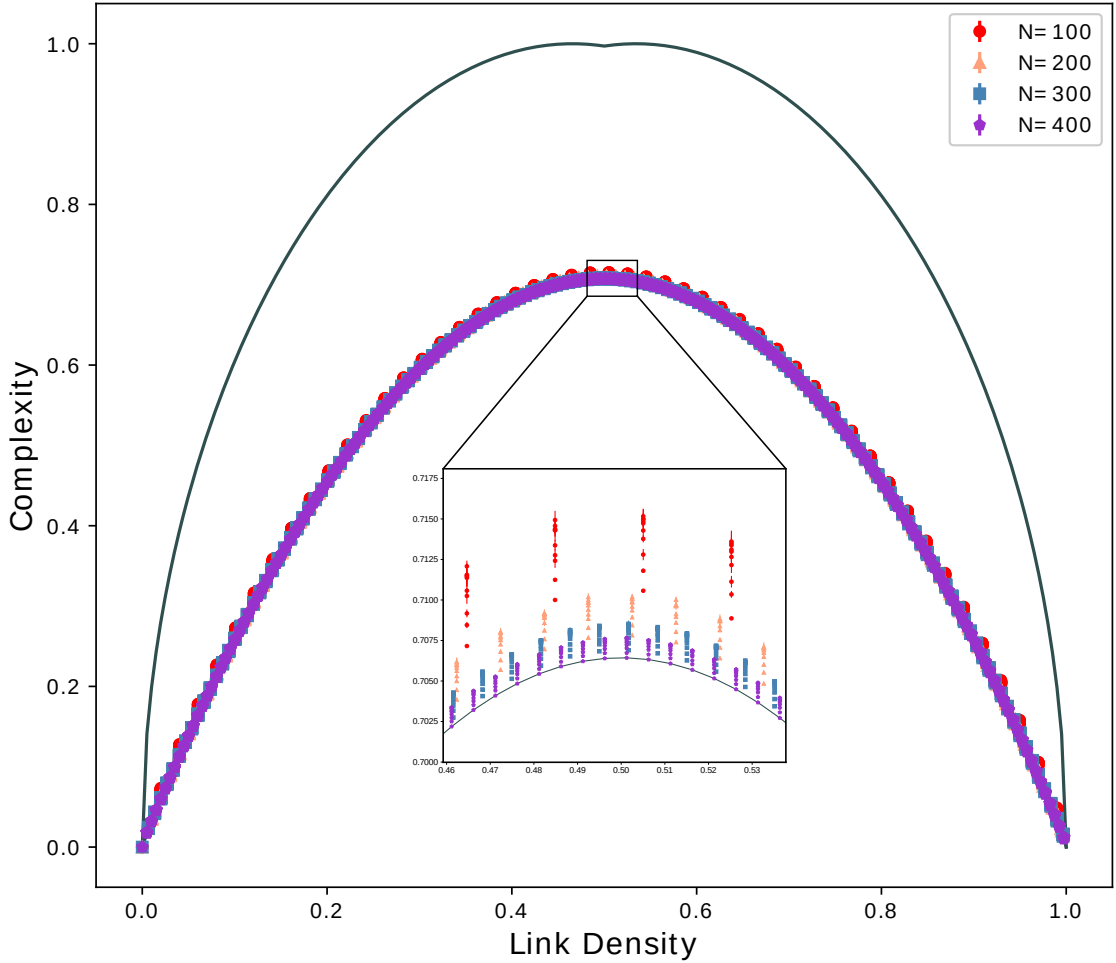}}
\caption{ Analysis of the Watts--Strogatz model for parameters $k \in [1, \frac{n}{2}-1]$ and $\beta \in [0,1] $ with $\Delta \beta=0.1$, and number of nodes $n=100,200,300,400$. Each point/bar represents the mean value/standard deviation performed over 100 iterations.}
\label{fig:WS_model}
\end{figure}

\subsection{Barab\'asi-Albert model}
The Barab\'asi--Albert (BA) model produces graphs by a random growing mechanism (\cite{barabasi1999emergence}). It is determined by an integer parameter $m$, that has to be positive and smaller than the number of nodes $n$. The process begins with a completely disconnected graph of $m$ nodes that evolves in stages, adding a node and $m$ edges attached to it at each stage, until a graph of $n$ nodes is obtained. 
The edges that connect every new node are decided by preferential attachment, which is a probabilistic method of picking the nodes to be attached with a probability proportional to the actual degree of the existing nodes. 
Both growth and preferential attachment exist largely in real networks, so the graphs obtained by this method actually share some properties with them. 
In particular, preferential attachment produces hubs (highly connected nodes) and peripheral communities, where nodes have similar degree. The hubs are few but with much higher degree. 
Actually, the degree distribution of the nodes can be fitted by a power--law $P(i) \sim i^{-\gamma}$, where $\gamma$ is the degree exponent, usually $2 \leq \gamma \leq 3$. That is the so called \textit{scale--free} property, commonly observed in many social networks.

For this case we generate graphs with number of nodes $n=100, 200, 300, 400$ and parameter $\ell$ varying between $1$ and $n$. For each fixed values of $n$ and $\ell$, $100$ graphs were generated and with the same process as in the previous models we obtain Figure~\ref{fig:BA_model}.

The pattern depicted in the plane by the BA--graphs is the most peculiar of the three models considered. It is worthy to notice first that the shape of the curve determined when the parameter $\ell$ runs through its entire domain remains the same regardless of the number of nodes, moreover it stabilizes for large values of $n$. 
In the following observations we consider BA--graphs of a fixed number of nodes $n$. 
The link density of the graphs grows together with the parameter $\ell$ until the critical value $\ell_1=\frac{n}{2}$ is reached, where the link density gets
slightly over 0.5, and then it decreases. However, the complexity continues to grow for a while up to another threshold at some point $\ell_2>\ell_1$. 
Furthermore, the complexity gap between 
the curve and the lower boundary of the croissant shaped region keeps increasing for higher values of $\ell$. 
As well, when $\ell$ gets closer to $n$ the complexity values approach to the upper frontier of the croissant shaped region, making contact 
at $\ell=n-1$.
So we have that at the same link density the method produces two graphs of quite a different type in relation to their complexity. 
That could be explained by the greater concentration (fewer and more connected hubs) present in the graphs obtained at higher values of $\ell$.

\begin{figure}[h]
\centerline{\includegraphics[width=\textwidth]{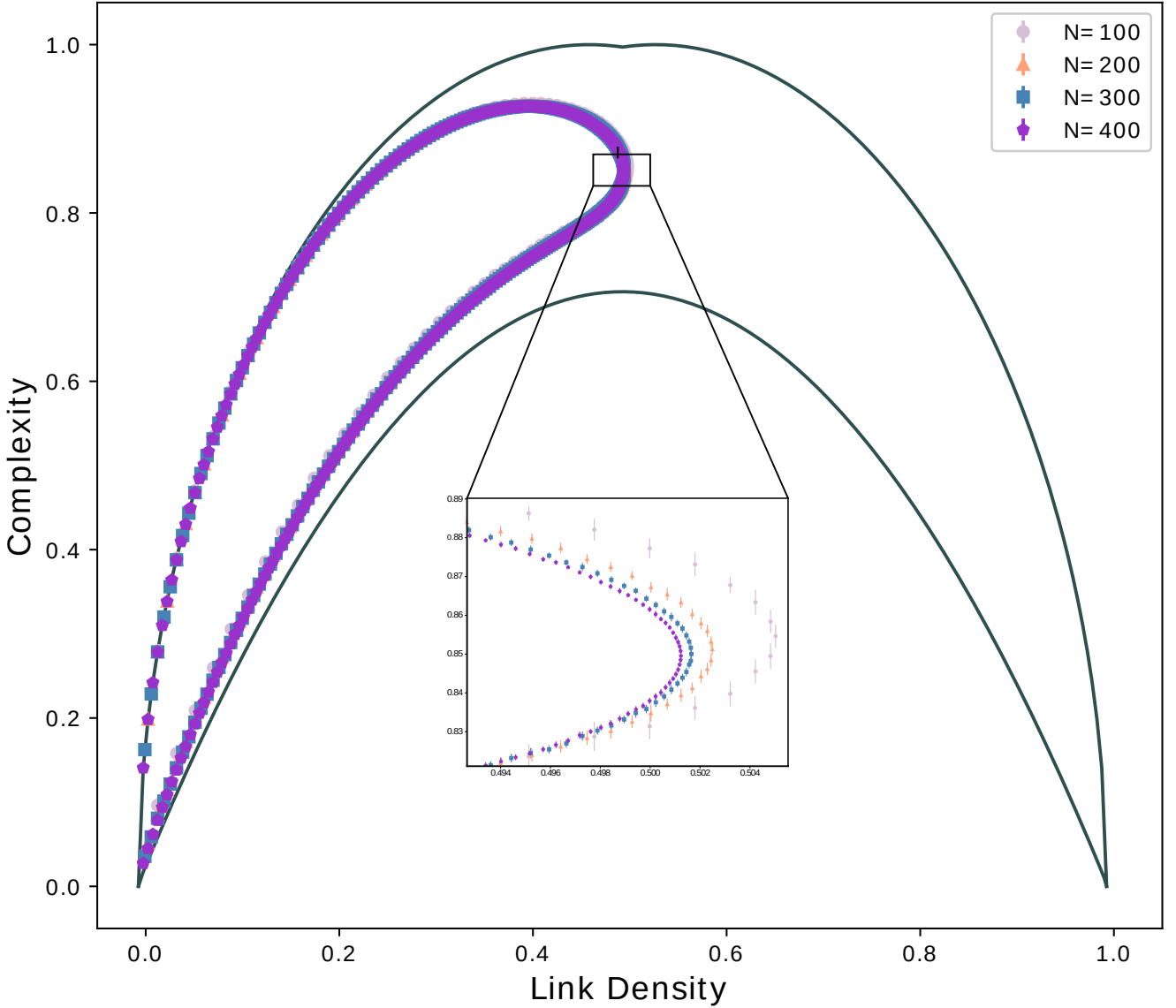}}
\caption{ Analysis of the Barab\'asi-Albert model for graphs with different number of nodes $n=100,200,300,400$ and parameter of preferential attachment $\ell$ ranging from $1$ to $n$. Each point/bar represents the mean value/standard deviation performed over 100 iterations.}
\label{fig:BA_model}
\end{figure}

\begin{figure}[h]
\centerline{\includegraphics[width=\textwidth]{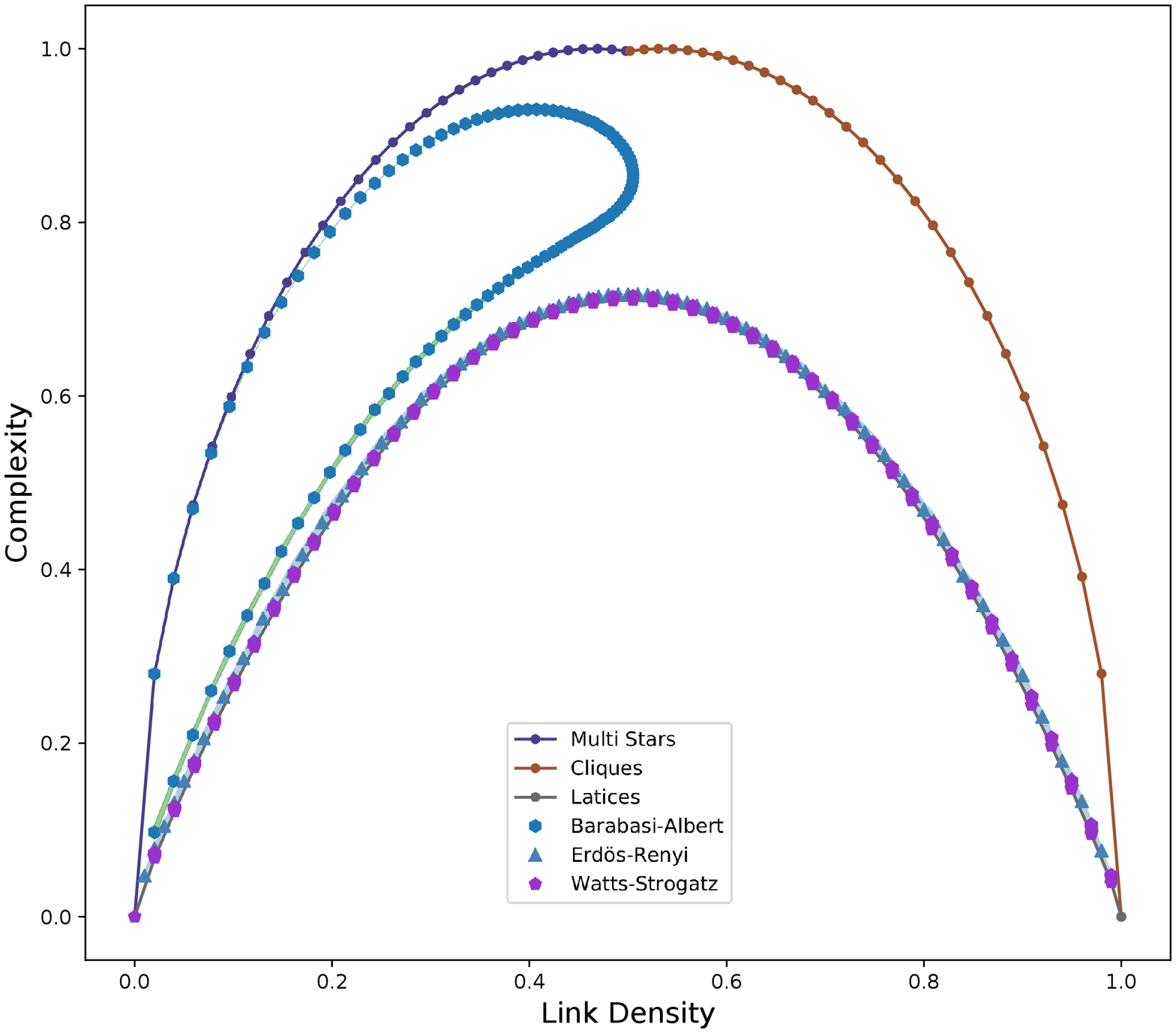}}
\caption{  Overview and comparison of the results obtained for all the network models discussed before. In this case we use for all models $n=100$. }
\label{fig:Allmodels}
\end{figure}

\section{Application to neurophysiological data}
\label{sec:ApplicationNeuro}

To test the complexity on real data, we analyse brain connectivity graphs over two epileptic patients. The former, ``patient 1''  suffering primary generalized epilepsy, and the latter ``patient 2'' with secondary generalized epilepsy. Details of the patients’ epilepsy's, seizure types, and the recording specification have been presented in previous studies \cite{dominguez2005enhanced}. Each patient underwent a magnetoencephalography (MEG) recording in the Baseline period  and in the ictal period (Seizure). The recording was performed using a 144 channel MEG with 625 Hz sampling rate. In each state, we took 17 windows of  5-second recording and we calculated the functional connectivity. The connectivity  was implemented through the Phase Synchrony Index ($R$) method over all possible pairwise signal combinations. The methods have been extensively described in several publications, so we refer the reader to a few representative papers \cite{dominguez2005enhanced,mormann2000mean}.

The $R$ analysis was performed for five central frequencies $f= (3, 5, 10, 20, 30) \pm 2 Hz$. 
This generated $17 \times 5$ connectivity matrices for the baseline and seizure state. The connectivity matrices were binarised using a threshold defined by a surrogate analysis (see \cite{mateos2017consciousness}).  Complexity and link density was calculated for all matrices. We computed the mean value and standard deviation of both quantities and plotted in the \textit{complexity-link density} plane. 

\begin{figure}[h]
\centerline{\includegraphics[width=\textwidth]{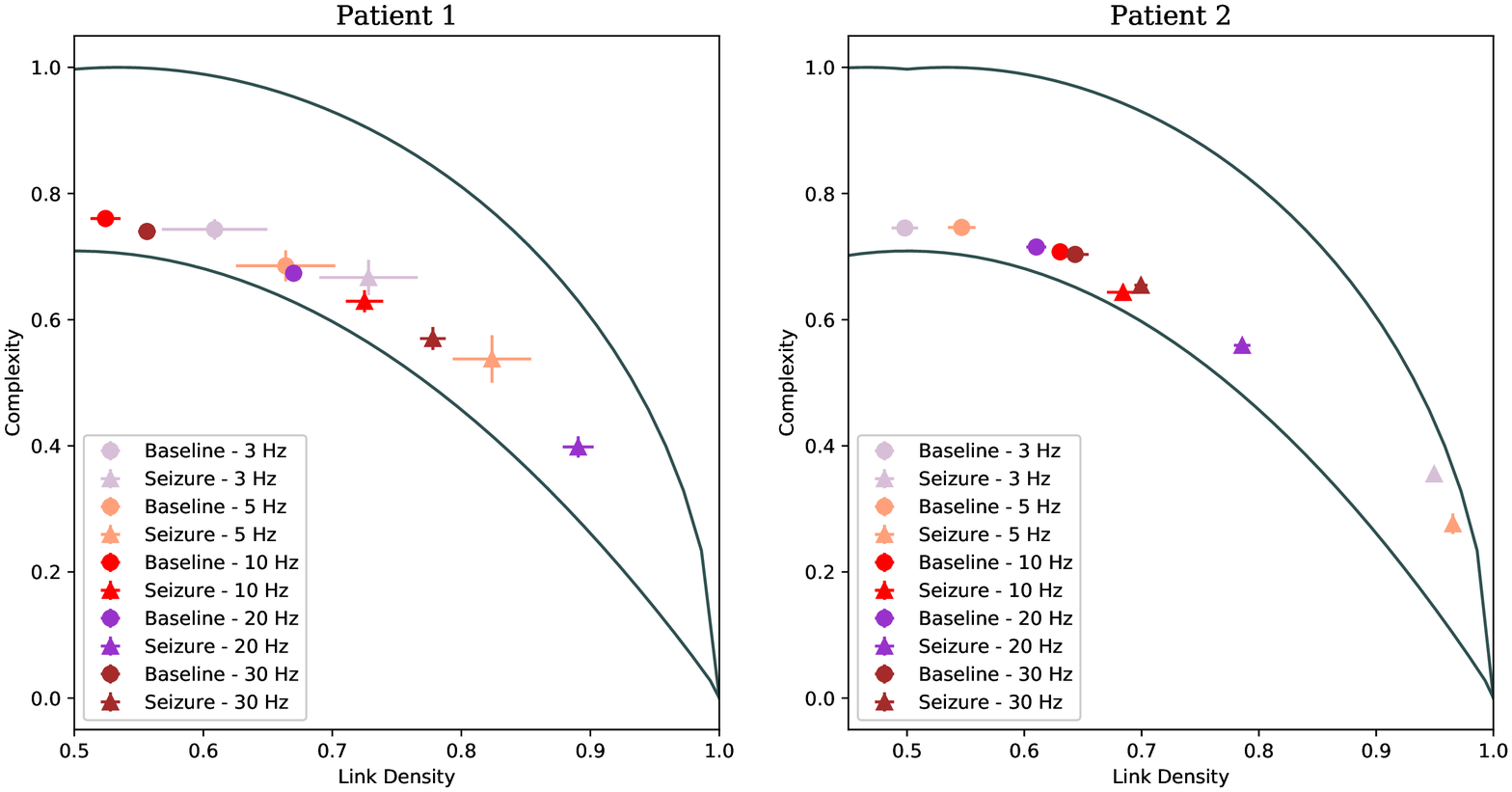}}
\caption{  Complexity analysis for two patients suffering two kinds of epilepsy (Patient 1  primary generalized epilepsy and Patient 2 secondary generalized epilepsy). The circle represent Baseline state and the triangle the Seizure state. Each analysis was performed for different central frequencies $[3, 5, 10, 20, 30] Hz$. Each point/bar represents the mean value/standard deviation performed over 15 MEG signal epochs. 
  }
\label{fig:CompSz}
\end{figure}

Figure \ref{fig:CompSz} shows the complexity analysis for patient 1 (left) and patient 2 (right). The circles represent the values for the baseline state, while the triangles are seizure state. Each point is the mean value calculated over 17 matrices and the bars represent their respective error. 

Patient 1 shows a higher complexity and intermediate link density for the baseline state while for the seizure state, less complexity and greater connectivity are observed for the 5 frequencies analysed. 
The main difference between states is observed for high frequencies (20 and 30 Hz).  Similar behaviour shows patient 2, however the most marked differences between states occur for the low frequencies (3 and 5 Hz).

The result shows that complexity can discern between both states for all frequencies. In the Seizure state, the system generates a high global synchronisation of the brain. This makes the analysis results in highly connected connectivity matrices (high link density). Furthermore, it can be observed that these matrices have a low complexity. Low complexities for Seizure states have been found in other studies \cite{mateos2017consciousness,erra2016statistical}.

The baseline state presents a high complexity because the connectivity in the graph is more distributed among its nodes. This graph topology allows the brain to integrate and segregate information more efficiently. These results it are in agreements with the  Information Integration Theory Hypothesis, which claims that in awake state, the integration and segregation of information in the brain tends to be maximal \cite{tononi2004information}.

\section{Discussion}
\label{sec:Discussion}

Our aim in this note is to introduce an energy based notion of complexity for graphs. As in classical quantum mechanics, the energy of a ``free particle'' 
is provided by the best known Hamiltonian, to wit, the Laplacian. Moreover the admissible quantified  values of the energy are reflected in the spectra of the Laplace operator of the setting. The Laplace operator and its spectral theory is well known on undirected graphs. Once the spectral analysis of our graph is carried over, in particular, once the sequence of eigenvalues is obtained, based on the naive 
idea that fully connected and fully disconnected graphs are not at all complex, we define our notion of complexity. Our first experiment with Erd\"os-R\'enyi random graphs taking the link probability $p$ of the model as the independent variable, lead us to the expected ``inverse U'' shaped curve in the plane $p$ versus complexity. Of course such a behavior of the shape of the curve was also observed in the plane of the variables link density--complexity. Moreover the same basic shape 
is observed for the Watt-Strogatz random model. As it should be expected 
for the Barab\'asi--Albert random model, a turn back 
of the curve in the plane link density--complexity is clearly observed. But, more important, this return to the origin of the curve takes place with higher complexity in the sense of our definition. With these basic observations 
at hand we empirically detected, through the analysis of a variety of deterministic graphs, the plane region in the link density-complexity variables spanned by all graphs with any number of nodes. We found that, generically, for each link density we have a diversity of complexities. Moreover, we found that for a fixed link density the less complex are lattice-like graphs and the more complex are star-like graphs. Since our original interest in defining such an energy based complexity for graphs is motivated by brain-connectivity graphs 
related to epilepsy, we include the localization of some of these graphs in our ``croissant shaped'' region in order to wit their complexity differentiation. 




\providecommand{\bysame}{\leavevmode\hbox to3em{\hrulefill}\thinspace}
\providecommand{\MR}{\relax\ifhmode\unskip\space\fi MR }
\providecommand{\MRhref}[2]{%
	\href{http://www.ams.org/mathscinet-getitem?mr=#1}{#2}
}
\providecommand{\href}[2]{#2}

\appendix  
\section{Appendix}
\subsection{Some mathematical results}
\label{sec:Appendix}

The trace of the Laplacian is a feature of interest for the kind of analysis we are conducting. 
\begin{property}
    $\Big |\sum\limits_{i=1}^n \lambda_i\Big | = |\text{trace}(L)| = 2\,\sharp(\E) = 2m$.
\end{property}
The operator $L$ as defined in \eqref{eq:laplacian} is an unnormalized version of the Laplacian. Some of its normalizations lead to it having trace equal to $\sharp(\V)=n$ or to 1.

\begin{proposition}
	Let $G$ be a graph on $n$ nodes with spectrum $\overline{\lambda}$. Let $G^\complement$ be the complement of $G$, with spectrum $\overline{\lambda^\complement}$. Let $Z$ and $F$ be the null and the complete graph, respectively, on $n$ nodes.
	\begin{enumerate}[$(a)$.]
		\item The spectrum of $F$ is given by $\ \overline{\lambda}_F=-n\,\bar{1}= -\overline{n}$, where $\bar{1}$ is the vector of ones in $\mathbb{R}^{n-1}$.
		\item The spectrum of $Z$ is given by $\ \overline{\lambda}_Z=\bar{0}.$
		\item The spectrum of $G^\complement$ is given by $\overline{\lambda^\complement}= -\overline{n} - \overline{\lambda}$.
		\item $C(G) = \left\|\overline{\lambda}\right\| \big\|\overline{\lambda^\complement}\big\| = \left\|\overline{\lambda}\right\| \big\|\overline{n}+\overline{\lambda}\big\|$.
	\end{enumerate}
\end{proposition}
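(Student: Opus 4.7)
For part $(a)$ I would identify the adjacency matrix of $F$ with $J-I$, where $J$ denotes the $n\times n$ all-ones matrix, and its degree matrix with $(n-1)I$, so that
\begin{equation*}
\Delta_F=(J-I)-(n-1)I=J-nI.
\end{equation*}
Since $J$ has eigenvalue $n$ on $\bar 1$ and eigenvalue $0$ on the orthogonal complement $\bar 1^{\perp}$, the operator $\Delta_F$ has eigenvalue $0$ on $\bar 1$ and eigenvalue $-n$ with multiplicity $n-1$. Discarding the trivial zero eigenvalue yields $\overline{\lambda}_F=-\overline{n}$. Part $(b)$ is immediate: $W_Z=0$ and $D_Z=0$ force $\Delta_Z=0$, so the entire spectrum vanishes.

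For part $(c)$ the key step is to establish the algebraic identity
\begin{equation*}
\Delta+\Delta^{\complement}=J-nI,
\end{equation*}
which I would obtain by adding $W+W^{\complement}=J-I$ (off-diagonal entries complement to one, diagonal is zero) and $-(D+D^{\complement})=-(n-1)I$ (complementary degrees at each node sum to $n-1$). The crucial observation next is that $\bar 1$ lies in $\ker\Delta$ for every graph Laplacian, since each row of $W-D$ has zero sum. Consequently the eigenvectors $\psi_2,\dots,\psi_n$ corresponding to the non-trivial eigenvalues $\lambda_2,\dots,\lambda_n$ can be chosen orthogonal to $\bar 1$ and are therefore annihilated by $J$. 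Applied to any such $\psi_i$, the identity above gives
\begin{equation*}
\Delta^{\complement}\psi_i=(J-nI-\Delta)\psi_i=-(n+\lambda_i)\psi_i,
\end{equation*}
so the same $\psi_i$'s diagonalise $\Delta^{\complement}$, with eigenvalues $-n-\lambda_i$. This is precisely the assertion $\overline{\lambda^{\complement}}=-\overline{n}-\overline{\lambda}$.

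Part $(d)$ is then pure bookkeeping with the definitions: by $(b)$, $d_s(G,Z)=\|\overline{\lambda}\|$; by $(a)$, $d_s(G,F)=\|\overline{\lambda}-(-\overline{n})\|=\|\overline{\lambda}+\overline{n}\|$; and by $(c)$, $\|\overline{\lambda}+\overline{n}\|=\|{-\overline{\lambda^{\complement}}}\|=\|\overline{\lambda^{\complement}}\|$. Multiplying these two distances produces the claimed formula $C(G)=\|\overline{\lambda}\|\,\|\overline{\lambda^{\complement}}\|=\|\overline{\lambda}\|\,\|\overline{n}+\overline{\lambda}\|$.

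The only genuinely delicate point, and what I would flag as the sole ``obstacle'', is that the vector identity in $(c)$ is a statement about the spectra as multisets and need not respect the monotone decreasing ordering used to define $\overline{\lambda}$. However, this is harmless for the sole subsequent use in $(d)$, since the Euclidean norm is permutation invariant and the complexity $C_s$ depends on $\overline{\lambda^{\complement}}$ only through its norm.
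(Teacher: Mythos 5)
Your proof is correct and follows essentially the same route as the paper: the key identity $\Delta+\Delta^{\complement}=J-nI$ is exactly the paper's decomposition $L^{\complement}=L_F-L_G$, and the observation that the eigenvectors orthogonal to $\bar 1$ simultaneously diagonalise both Laplacians is the paper's argument as well. Your closing remark about the ordering of the spectrum being only a multiset identity (harmless because the norm is permutation invariant) is a point the paper glosses over, and is worth keeping.
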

\begin{proof}
    We will prove only the items $(c)$ and $(d)$, the first two are well known and easy to verify. To see $(c)$, notice that the adjacency matrix of $G^\complement$ is given by $W^\complement = \overline{\overline{1}}-I-W$, where $\overline{\overline{1}}$ is the $n\times n$ matrix of all ones and $I$ is the identity matrix. The degree matrix of $G^\complement$ is given by $D^\complement = (n-1)I-D$. So the Laplacian of $G^\complement$ is given by $L^\complement = W^\complement - D^\complement = \overline{\overline{1}}-I-W-((n-1)I-D) = (\overline{\overline{1}}-nI)-(W-D) = 
    L_F - L_G$, where $L_F$ is the Laplacian of the complete graph and $L_G=L$. Observe that the eigenspace associated with the first eigenvalue $\lambda_1=0$ is shared by all the graphs of $n$ nodes, and is the one generated by the eigenvector $\psi_1=\bar{1}$. The complete graph Laplacian has $n$ as eigenvalue with multiplicity $n-1$, and its corresponding eigenspace is the orthogonal complement of $\psi_1$, so we can take $\{\psi_2,\dots,\psi_n\}$ as its orthonormal basis. Hence $\{\psi_1,\psi_2,\dots,\psi_n\}$ constitutes an orthonormal basis of eigenvectors for both Laplacians $L_F$ and $L_G$, and so it does for $L^\complement$. Then $(c)$ is proved and $(d)$ follows. 
\end{proof}

\end{document}